\newtheorem{corollary}{Corollary}
\newtheorem{lemma}{Lemma}
\newtheorem{proposition}{Proposition}
\newtheorem{definition}{Definition}
\begin{document}

\preprint{APS/123-QED}

\title{Pregeometric Origins of Liquidity Geometry in Financial Order Books}

\author{Jo\~ao P. da Cruz}
 \affiliation{The Quantum Computer Company, Lisbon, Portugal}
  \email{joao@quantumcomp.pt}
 \affiliation{Center for Theoretical and Computational Physics, Lisbon, Portugal}


\date{\today}

\begin{abstract}
We propose a structural framework for the geometry of financial order books in
which liquidity, supply, and demand are treated as emergent observables rather
than primitive economic variables.
The market is modeled as an inflationary relational system without assumed
metric, temporal, or price coordinates.
Observable quantities arise only through projection, implemented here via
spectral embeddings of the graph Laplacian.
A one-dimensional projection induces a price-like coordinate, while the
projected density defines liquidity profiles around the mid price.
Under a minimal single-scale hypothesis—excluding intrinsic length scales
beyond distance to the mid and finite visibility—we show that projected supply
and demand are constrained to gamma-like functional forms.
In discrete data, this prediction translates into integrated-gamma cumulative
profiles.
We test these results using high-frequency Level~II data for several U.S.
equities and find robust agreement across assets and intraday windows.
Explicit comparison with alternative cumulative models using information
criteria demonstrates a systematic preference for the integrated-gamma
geometry.
A minimal simulation of inflationary relational dynamics reproduces the same
structure without invoking agent behavior or price formation mechanisms.
These results indicate that key regularities of order-book liquidity reflect
geometric constraints induced by observation rather than detailed
microstructural dynamics.
\end{abstract}

\maketitle

\section{Introduction}
\label{sec:introduction}

Financial markets are commonly described in terms of prices evolving in time,
with supply and demand curves interpreted as behavioral responses of trading
agents.
In this view, order books encode strategic intentions, information asymmetries,
and equilibrium-seeking dynamics.
Despite the diversity of such models, empirical studies of high-frequency data
have repeatedly documented robust structural regularities in order-book
liquidity, including convex profiles near the mid price, heavy tails, and
scale-dependent decay
\cite{BouchaudMezardPotters2002,Bouchaud2004,Bouchaud2009}.
The origin of these regularities, however, remains debated.

Most existing approaches treat price and time as fundamental variables and
seek to explain observed order-book shapes through explicit microstructural
mechanisms, such as order placement strategies, inventory control, or market
impact.
Agent-based and stochastic models have been particularly successful in this
respect.
Notably, the model introduced by Mike and Farmer~\cite{MikeFarmer2008}
demonstrated that heavy-tailed order placement and cancellation processes are
sufficient to generate realistic order-book shapes and price diffusion.
Similarly, Cont, Stoikov, and Talreja~\cite{ContStoikovTalreja2010} proposed a
queue-reactive framework in which liquidity profiles emerge from the
interaction of stochastic order flows at discrete price levels.
While these models reproduce many empirical features, they rely on behavioral
assumptions, calibration choices, or equilibrium concepts whose universality
across assets and market conditions is difficult to establish.

In this work, we adopt a complementary perspective.
Rather than modeling price formation directly or specifying microscopic order
flow mechanisms, we ask whether familiar order-book observables can emerge
generically from the projection of a more primitive relational structure.
Our approach is inspired by pregeometric frameworks in statistical physics,
where geometry and dynamics are not assumed at the microscopic level but arise
only through observation and coarse-graining.

We model the market as an inflationary relational network whose fundamental
description contains no metric, temporal ordering, or economic coordinates.
Vertices represent abstract economic entities, and edges encode the possibility
of interaction.
Growth and reorganization proceed through inflationary updates that generate
heterogeneous, hub-dominated structures without fine-tuning.
Crucially, none of the standard market observables are defined at this level.

Observable quantities arise only through projection.
By applying spectral embeddings based on the graph Laplacian, an observer
assigns effective coordinates to the relational substrate.
A one-dimensional projection induces a scalar coordinate naturally interpreted
as price, while the distribution of projected density defines liquidity
profiles.
Successive projections of an evolving relational system give rise to apparent
time series, returns, and fluctuations, even though no fundamental time
variable exists microscopically.

Within this framework, supply and demand are not independent behavioral curves
but geometric branches of a single projected density.
We show that, under minimal structural assumptions on regularity, vanishing
liquidity at the mid, and exponential decay at large distances, the projected
liquidity profiles necessarily adopt a gamma-like functional form.
This result is structural and geometric in nature, and does not rely on
equilibrium arguments, optimization principles, or agent-level behavior.

We test these ideas empirically using high-frequency Level~II data for several
U.S. equities across different sectors.
By focusing on short time windows, we extract instantaneous liquidity profiles
relative to the mid price and show that their cumulative forms are well
described by integrated gamma functions.
The fitted parameters vary across assets and windows, indicating that they
characterize local projected geometry rather than stationary market states.
A minimal simulation of inflationary relational dynamics reproduces the same
functional structure in the absence of any market-specific assumptions.

The goal of this work is not predictive.
Instead, it is to demonstrate that standard order-book regularities can be
understood as emergent geometric observables induced by projection.
This perspective shifts the modeling focus from price dynamics to the geometry
of observation and provides a unifying structural interpretation of liquidity
in financial markets.
\section{Relational inflationary model}
\label{sec:model}

We introduce a minimal relational framework aimed at understanding how
market observables arise without assuming price, time, return, or risk as
primitive variables.
The construction is deliberately \emph{pregeometric}: no metric, ordering,
or economic coordinate is postulated at the microscopic level.
All quantities conventionally used to describe markets will be shown to
emerge only through observation.

\subsection{Relational substrate}

The system is defined by a growing graph $G=(V,E)$, where vertices represent
economic entities (agents, venues, or abstract trading units) and edges encode
the possibility of interaction or exchange.
At this level, the graph is purely combinatorial:
it carries no spatial embedding, no weights, and no intrinsic notion of
distance.

The only primitive object is adjacency.
In particular, there is no distinguished notion of price, time, or volume
attached to vertices or edges.
Observable quantities will be defined \emph{a posteriori} through projection.

\subsection{Inflationary dynamics}

The relational substrate evolves through a sequence of local updates,
corresponding to the addition, removal, or rearrangement of edges and
vertices.
We refer to this process as \emph{inflationary} in the sense that growth is
multiplicative and heterogeneous, leading generically to hub-dominated and
scale-free structures.

Such dynamics can be schematically summarized by degree increments of the
form
\[
\frac{dk}{k} = \beta, \quad \beta > 0.
\]
which are known to generate heavy-tailed degree distributions without
fine-tuning.
The specific microscopic rules governing the updates are not essential for
the present discussion; what matters is that the resulting relational
structure is heterogeneous and out of equilibrium.
A detailed analysis of inflationary network growth can be found in
Ref.~\cite{PiresDaCruz2025}.

\subsection{Absence of geometry and time}

The evolving graph is not embedded in any metric space.
Distances, angles, and volumes are undefined at the microscopic level.
Likewise, the update index labeling successive configurations of the graph
does not represent physical or economic time; it merely orders relational
rearrangements.

This absence of geometry and time is a defining feature of the model.
Any geometric or temporal structure observed later must therefore arise from
the act of observation itself, not from microscopic assumptions.

\subsection{Observational projection}

Observable quantities are defined by projecting the relational structure onto
a low-dimensional space accessible to an observer.
Operationally, we implement this projection using spectral properties of the
graph Laplacian
\[
L = D - A,
\]
where $A$ is the adjacency matrix and $D$ the degree matrix.

Low-dimensional embeddings constructed from the leading nonzero eigenvectors
of $L$ assign effective coordinates to vertices.
A one-dimensional embedding produces a scalar coordinate, while higher
dimensions provide additional degrees of freedom.
Crucially, these coordinates are not intrinsic properties of the graph but
attributes of the chosen projection.

Different observers, or different projection schemes applied to the same
relational substrate, may therefore assign different effective geometries.
This observer dependence is not a defect of the model but a structural
feature of any framework in which geometry is emergent.

\subsection{Emergent time and apparent dynamics}

Dynamics in the projected space arises from comparing projections of
successive configurations of the relational substrate.
Effective time series are constructed by ordering projected states according
to the update sequence, even though no fundamental time variable exists at the
microscopic level.

Apparent price dynamics, returns, and volatility therefore reflect the
response of the projection to topological rearrangements in the underlying
inflationary network.
They should not be interpreted as autonomous stochastic processes evolving in
a pre-existing price--time space.

\subsection{Scope and limitations}

The present framework does not aim to reproduce detailed market
microstructure or to predict price trajectories.
Its purpose is structural: to demonstrate that standard market observables
can emerge generically from relational inflation without being assumed as
primitive ingredients.

In the following sections, we analyze the consequences of this construction
and show how notions such as supply, demand, equilibrium, and liquidity arise
as geometric properties of the projected relational system.
Throughout, the term ``pregeometric'' refers to the absence of any assumed
metric, temporal, or economic structure at the microscopic level, rather than
to a claim about the ultimate ontology of markets.


\section{Emergent observables}
\label{sec:observables}

In the present framework, market observables are not microscopic variables
attached to vertices or edges.
They are operational quantities that arise only through the act of
observation, implemented as a projection of the relational substrate onto a
low-dimensional space.
This section provides explicit definitions of price, time, return, and risk
as emergent observables associated with such projections.

\subsection{Projection-induced coordinates}

Let $G=(V,E)$ denote the relational network at a given update step.
An observer assigns effective coordinates to vertices by embedding the graph
into a low-dimensional metric space using spectral properties of the graph
Laplacian
\[
L = D - A,
\]
where $A$ is the adjacency matrix and $D$ the degree matrix.
Spectral embeddings of this type are standard tools for extracting geometric
structure from purely relational data
\cite{BelkinNiyogi2003,Chung1997,vonLuxburg2007}.

Denoting by $\{\phi_\alpha\}$ the eigenvectors associated with the smallest
nonzero eigenvalues of $L$, an embedding into $\mathbb{R}^d$ is obtained via
\[
\mathbf{x}_i = \big(\phi_1(i),\ldots,\phi_d(i)\big).
\]
These coordinates encode relational information filtered through the chosen
projection.
They do not reflect intrinsic properties of the vertices, and different
choices of embedding correspond to different observational frames.

\subsection{Price as a scalar projection}

A one-dimensional projection ($d=1$) assigns a scalar coordinate $x_i$ to each
vertex.
We define the emergent price associated with vertex $i$ as
\[
p_i \equiv x_i,
\]
up to an arbitrary affine transformation reflecting the absence of an
absolute price scale.

In this interpretation, price is not a primitive economic variable but a
coordinate induced by the relational topology through projection.
Changes in price correspond to changes in the observer’s embedding of the
evolving network, rather than to local exchange dynamics or equilibrium
adjustments.
Similar projection-induced scalar observables have been discussed in other
relational and pregeometric contexts
\cite{BarabasiAlbert1999,AmbjornJurkevichLoll2005}.

\subsection{Emergent time}

The update index labeling successive configurations of the relational
substrate does not represent physical or economic time.
Effective time emerges only through comparison of projected configurations.
Given a sequence of projections $\{p_i^{(n)}\}$ obtained at successive update
steps $n$, an observer constructs time-ordered series by identifying
corresponding vertices across projections.

Time is therefore defined operationally as an ordering parameter associated
with changes in the observed geometry, not as a fundamental variable
governing microscopic dynamics.
This notion of time as an emergent ordering has close analogues in
pregeometric and background-independent approaches to physics
\cite{Rovelli2004,Oriti2014}.

\subsection{Returns as projection responses}

Given two successive projected configurations, the return associated with
vertex $i$ is defined as
\[
r_i^{(n)} = p_i^{(n+1)} - p_i^{(n)}.
\]
Returns measure the response of the projection to topological rearrangements
in the underlying relational network.
They are not generated by autonomous stochastic dynamics in price space, but
by structural changes filtered through the observer’s projection.

This definition makes explicit that returns are inherently relational and
observer-dependent quantities.
Heavy-tailed return statistics and bursty fluctuations may therefore arise
from heterogeneous structural updates rather than from microscopic price
formation mechanisms, as also suggested by network-based approaches to
financial dynamics \cite{BouchaudPotters2003,FarmerLillo2004}.

\subsection{Risk and fluctuations}

Risk is associated with fluctuations of returns across successive
projections.
Operationally, it is characterized by the variance of returns over a finite
observational window,
\[
\sigma_i^2 = \langle r_i^2 \rangle - \langle r_i \rangle^2.
\]

In the present framework, risk does not quantify uncertainty about future
prices in a probabilistic forecasting sense.
Instead, it measures the sensitivity of the projected observable to
rearrangements in the underlying relational topology.
Regions of the network whose projections are strongly affected by inflationary
updates or hub reconfigurations naturally exhibit enhanced volatility.

\subsection{Observer dependence}

All observables defined above depend explicitly on the chosen projection.
Different observers, or different projection schemes applied to the same
relational substrate, may assign distinct prices, times, returns, and risk
profiles.
This observer dependence is not a defect of the model but a structural feature
of any framework in which geometry and dynamics are emergent.

The relational network itself carries no preferred set of observables.
Market quantities arise only at the level of observation, as projections of a
pregeometric inflationary system.

Figure~\ref{fig:pipeline} summarizes the observational pipeline from a
pregeometric relational substrate to the cumulative liquidity geometry tested
below.

\begin{figure}[t]
\centering
\begin{tikzpicture}[font=\small, node distance=8mm, align=center]
\tikzstyle{box}=[draw, rounded corners, inner sep=6pt]

\node[box] (G) {$G=(V,E)$\\Relational substrate\\(no price, no time)};
\node[box, below=of G] (L) {$L=D-A$\\Graph Laplacian};
\node[box, below=of L] (P) {$p_i=\phi_1(i)$\\1D projection};
\node[box, below=of P] (mid) {$p^\star$\\Mid per snapshot};
\node[box, below=of mid] (ticks) {$x=\dfrac{|p-p^\star|}{\Delta}$\\Ticks from mid};
\node[box, below=of ticks] (cum) {$S(x)=\sum_{u\le x} q(u)$\\Cumulative liquidity};
\node[box, below=of cum] (fit) {$S(x)\propto \gamma(\gamma+1,\lambda x)$\\Integrated-gamma fit};

\draw[->] (G) -- (L);
\draw[->] (L) -- (P);
\draw[->] (P) -- (mid);
\draw[->] (mid) -- (ticks);
\draw[->] (ticks) -- (cum);
\draw[->] (cum) -- (fit);

\end{tikzpicture}
\caption{\textbf{Pregeometric pipeline from relational structure to observable liquidity geometry.}
A purely relational substrate is projected through Laplacian eigenmodes onto a
one-dimensional observable coordinate. Defining a mid per snapshot and
measuring liquidity in tick distance from the mid yields cumulative profiles
$S(x)$, which are predicted to follow an integrated-gamma form under the
single-scale log-slope principle.}
\label{fig:pipeline}
\end{figure}
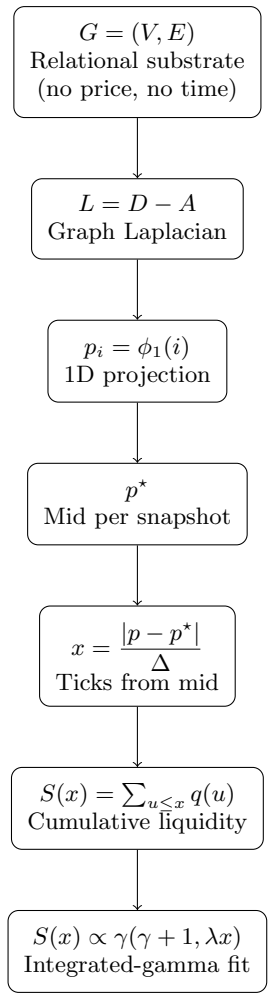
\section{Structural results: supply, demand, and liquidity geometry}
\label{sec:results}

We now derive the structural consequences of the pregeometric framework
introduced above.
Our goal is not to model price formation, strategic behavior, or equilibrium
dynamics, but to show how familiar economic notions such as supply, demand,
and liquidity arise as geometric observables induced by projection.
Throughout this section, all results are structural: no behavioral,
microeconomic, or optimization assumptions are invoked.

\subsection{Preliminaries: Laplacian projection}

Let $G=(V,E)$ be a connected undirected graph with adjacency matrix $A$ and
degree matrix $D=\mathrm{diag}(k_i)$.
We consider the combinatorial Laplacian
\[
L = D - A,
\]
which is symmetric and positive semidefinite.
Let $\{(\lambda_\alpha,\phi_\alpha)\}_{\alpha=0}^{|V|-1}$ denote an orthonormal
eigenbasis of $L$, ordered as
\[
0=\lambda_0 < \lambda_1 \le \lambda_2 \le \cdots .
\]

An observational projection onto one dimension is defined by the first
nontrivial eigenvector,
\[
p_i := \phi_1(i),
\]
which is unique up to sign and affine transformations.
We adopt the normalization
\[
\sum_{i\in V} \phi_1(i) = 0,
\]
reflecting the orthogonality of $\phi_1$ to the constant mode.
This projection induces an effective ordering of vertices, but introduces no
intrinsic metric or length scale.

\subsection{Inflationary updates and projection response}

Inflationary dynamics are modeled as local updates of the relational
substrate,
\[
G \mapsto G' = G + \delta G,
\]
inducing perturbations
\[
A \mapsto A + \delta A, \qquad
L \mapsto L' = L + \delta L, \qquad
\delta L = \delta D - \delta A .
\]

In heterogeneous growth processes, degree increments concentrate on
hub-dominated regions of the network, generating out-of-equilibrium relational
structures.
Under such an update, the projected coordinate changes as
\[
\Delta p_i := p_i' - p_i = \phi_1'(i) - \phi_1(i).
\]

We define the emergent excess demand at vertex $i$ as
\[
\mathcal{D}_i := -\Delta p_i ,
\]
so that $\mathcal{D}_i>0$ corresponds to effective demand pressure and
$\mathcal{D}_i<0$ to effective supply pressure in the projected coordinate.
These quantities are geometric responses to projection, not behavioral
variables.

\subsection{Aggregate balance identity}

\begin{proposition}[Aggregate balance identity]
\label{prop:balance}
For any inflationary update of the relational substrate, the induced projected
increments satisfy
\[
\sum_{i\in V} \Delta p_i = 0,
\]
and consequently
\[
\sum_{i\in V} \mathcal{D}_i = 0 .
\]
\end{proposition}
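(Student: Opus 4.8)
The plan is to reduce the statement to the mean-zero normalization of the Fiedler vector, which holds for \emph{every} connected graph and hence simultaneously for $G$ and for its update $G'$. First I would recall the standard spectral fact: since $L$ is the combinatorial Laplacian of a connected graph, its kernel is spanned by the all-ones vector $\mathbf{1}$, so $\phi_0 \propto \mathbf{1}$ and every other eigenvector is orthogonal to $\mathbf{1}$; equivalently $\sum_{i\in V}\phi_\alpha(i)=0$ for all $\alpha\ge 1$. This is exactly the normalization adopted in the preliminaries for $\phi_1$, and it is what fixes the additive part of the affine freedom in $p_i$.

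Next I would apply the same fact to the updated Laplacian $L' = L + \delta L$. Since $G'$ is again a connected graph, $\phi_1'$ is likewise orthogonal to the constant mode, so $\sum_{i\in V}\phi_1'(i)=0$. Subtracting the two normalization identities gives
\[
\sum_{i\in V}\Delta p_i=\sum_{i\in V}\phi_1'(i)-\sum_{i\in V}\phi_1(i)=0,
\]
and since $\mathcal{D}_i:=-\Delta p_i$, linearity of the sum yields $\sum_{i\in V}\mathcal{D}_i=0$ immediately. So the analytic content is essentially a one-line consequence of orthogonality to the constant mode, applied before and after the update.

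The only real obstacle is bookkeeping rather than estimation. First, the Fiedler vector is defined only up to sign, and up to a choice within the eigenspace when $\lambda_1$ is degenerate; but the mean-zero property is invariant under sign flips and under any choice inside an eigenspace that is orthogonal to $\mathbf{1}$, so no canonical continuation $\phi_1\mapsto\phi_1'$ is needed for this particular statement. Second, if the inflationary update also creates or deletes vertices, the two sums formally run over different index sets; the clean fix is to let $V$ denote the common vertex set (extending $\Delta p_i$ by zero on vertices present in only one configuration, equivalently summing over $V\cap V'$), after which the argument above is unchanged. I would state this convention explicitly at the start of the proof and then conclude as above.
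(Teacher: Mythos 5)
Your argument is correct and is essentially identical to the paper's own proof: both reduce the claim to the orthogonality of $\phi_1$ and $\phi_1'$ to the constant mode on a connected graph and subtract the two zero-sum identities. Your added remarks on sign/eigenspace ambiguity and on handling a changing vertex set are sensible bookkeeping that the paper leaves implicit, but they do not change the route.
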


\begin{proof}
For a connected graph, the nullspace of $L$ is spanned by the constant vector
$\mathbf{1}$.
All nontrivial eigenvectors are orthogonal to $\mathbf{1}$, implying
$\sum_i \phi_1(i)=0$ and, with consistent normalization,
$\sum_i \phi_1'(i)=0$.
Subtracting the two identities yields the result.
\end{proof}

This identity replaces Walras' law by a purely geometric statement: aggregate
excess demand vanishes as a consequence of spectral orthogonality, not market
clearing.

\subsection{Equilibrium as a spectral fixed point}

\begin{definition}[Spectral equilibrium]
An observational equilibrium is defined as a state of the inflationary
relational dynamics for which the projected coordinate is invariant on
average,
\[
\mathbb{E}[\Delta p_i] = 0 \quad \text{for all } i,
\]
where the expectation is taken over inflationary updates.
\end{definition}

Equilibrium is therefore not an intersection of independent supply and demand
curves, but a fixed point of the projection map under relational
rearrangements.
Such equilibria are generically transient in heterogeneous inflationary
systems.

\subsection{Liquidity as a projected density}

Given the projected coordinates $\{p_i\}$, we define the empirical liquidity
density as
\[
\rho(p) := \sum_{i\in V} w_i\,\delta(p-p_i),
\]
where $w_i\ge0$ are observational weights encoding visible size.
In empirical settings, $\rho(p)$ is estimated over finite time windows and
discretized grids.

All boundary conditions on $\rho$ should therefore be understood as
\emph{effective observational conditions}.
In particular, liquidity decays near the boundaries of the observable window,
\[
\rho(p) \to 0 \quad \text{toward the limits of visibility}.
\]

\subsection{Supply and demand as projected branches}

Let $p^\star$ denote the observational mid, defined as the point where the
signed imbalance changes sign,
\[
p^\star := \arg\min_p
\left|
\int_{-\infty}^{p}\rho(u)\,du -
\int_{p}^{\infty}\rho(u)\,du
\right|.
\]

We define the visible supply and demand profiles as
\[
Q_{\mathrm{s}}(p) := \rho(p)\,\mathbb{I}(p>p^\star), \qquad
Q_{\mathrm{d}}(p) := \rho(p)\,\mathbb{I}(p<p^\star),
\]
where $\mathbb{I}$ denotes the indicator function.
These objects represent one-sided geometric observables induced by projection,
not behavioral response functions.

\subsection{Single-scale log-slope hypothesis}

Introduce the signed distances from the mid,
\[
x := p-p^\star>0 \quad (\text{supply}), \qquad
y := p^\star-p>0 \quad (\text{demand}).
\]

At this stage, no functional form for the projected liquidity profiles has
been specified.
We now introduce a minimal \emph{single-scale log-slope hypothesis}, motivated
by the absence of intrinsic metric structure in the pregeometric substrate.

Specifically, within an observational window around the mid, we assume that
the projected liquidity profiles introduce no characteristic scale beyond
(i) the distance to the mid and (ii) a global decay scale associated with
finite visibility.
Under this hypothesis, the logarithmic derivatives
\[
g_{\mathrm{s}}(x) := \frac{d}{dx}\log q_{\mathrm{s}}(x), \qquad
g_{\mathrm{d}}(y) := \frac{d}{dy}\log q_{\mathrm{d}}(y),
\]
take the single-scale form
\begin{equation}
g_{\mathrm{s}}(x)=\frac{\gamma_{\mathrm{s}}}{x}-\lambda_{\mathrm{s}},
\qquad
g_{\mathrm{d}}(y)=\frac{\gamma_{\mathrm{d}}}{y}-\lambda_{\mathrm{d}},
\label{eq:single_scale_logslope}
\end{equation}
over the empirically accessible range.

We emphasize that Eq.~\eqref{eq:single_scale_logslope} is not derived from
microscopic dynamics.
It is introduced as a minimal structural hypothesis encoding scale
parsimony in the projected geometry.
The role of the pregeometric framework is to motivate this hypothesis by
excluding privileged length scales at the microscopic level, not to fix its
functional form uniquely.

\subsection{Gamma characterization}

\begin{lemma}[Gamma form from single-scale log-slope]
\label{lem:gamma_from_logslope}
Let $q\in C^{1}(0,\infty)$ with $q(x)>0$.
If
\[
\frac{d}{dx}\log q(x)=\frac{\gamma}{x}-\lambda \quad (x>0),
\]
then
\[
q(x)=C\,x^{\gamma}e^{-\lambda x},
\]
for some constant $C>0$.
\end{lemma}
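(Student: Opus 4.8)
The plan is to read the hypothesis as a first-order relation for $\log q$ and integrate it on the interval $(0,\infty)$, where an antiderivative is unique up to an additive constant. First I would note that $q\in C^{1}(0,\infty)$ together with $q(x)>0$ makes $\log q$ well defined and $C^{1}$, with $(\log q)'(x)=q'(x)/q(x)$; the hypothesis asserts that this equals $\gamma/x-\lambda$, which is continuous on $(0,\infty)$.

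Next I would antidifferentiate term by term, using $\frac{d}{dx}(\gamma\log x)=\gamma/x$ and $\frac{d}{dx}(\lambda x)=\lambda$ on $(0,\infty)$. Then the function $x\mapsto \log q(x)-\gamma\log x+\lambda x$ has identically vanishing derivative on the connected set $(0,\infty)$ and is therefore equal to a constant $c\in\mathbb{R}$. Exponentiating gives $q(x)=e^{c}x^{\gamma}e^{-\lambda x}$, so the claim holds with $C:=e^{c}$, which is positive automatically.

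An equivalent route that never forms $\log q$ is to set $h(x):=q(x)\,x^{-\gamma}e^{\lambda x}$ and compute, via the product rule and the hypothesis, $h'(x)=h(x)\big(q'(x)/q(x)-\gamma/x+\lambda\big)=0$; since $(0,\infty)$ is an interval, $h$ is constant, whence $q(x)=h\,x^{\gamma}e^{-\lambda x}$ with $h=q(1)e^{\lambda}>0$. Either way, the only point requiring care is that ``derivative zero implies constant'' needs the domain to be connected, which holds since $(0,\infty)$ is an interval; positivity of $q$ is exactly what legitimizes the logarithm (or the division by $q$), and positivity of the constant then comes for free from the exponential. I therefore expect no genuine obstacle here: this is a one-line ODE integration with only mild regularity bookkeeping.
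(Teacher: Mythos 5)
Your proof is correct and follows essentially the same route as the paper, which simply integrates the logarithmic derivative directly; you spell out the (mild) bookkeeping about connectedness of $(0,\infty)$ and positivity of $q$ that the paper leaves implicit. The alternative via $h(x)=q(x)x^{-\gamma}e^{\lambda x}$ is a fine equivalent reformulation but not a genuinely different argument.
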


\begin{proof}
Direct integration of the logarithmic derivative yields the stated form.
\end{proof}

Lemma~\ref{lem:gamma_from_logslope} is purely mathematical.
The physical and economic content of the model resides entirely in the
single-scale hypothesis Eq.~\eqref{eq:single_scale_logslope}, not in the
solution of the differential equation.

\subsection{Cumulative liquidity and integrated-gamma geometry}

In discrete and windowed data, direct estimation of $q(x)$ is unstable due to
empty bins and intermittent updates.
A more robust observable is the cumulative liquidity measured from the mid,
\[
S(x) := \int_{0}^{x} q(u)\,du .
\]

\begin{corollary}[Integrated gamma]
\label{cor:integrated_gamma}
If $q(x)=C x^{\gamma}e^{-\lambda x}$ for $x>0$, then
\[
S(x)=\frac{C}{\lambda^{\gamma+1}}
\,\gamma\!\left(\gamma+1,\lambda x\right),
\]
where $\gamma(a,z)$ is the lower incomplete gamma function.
\end{corollary}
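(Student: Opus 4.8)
The plan is to reduce the claim to the standard integral representation of the lower incomplete gamma function by a single affine change of variables. First I would insert $q(u)=Cu^{\gamma}e^{-\lambda u}$ into the definition $S(x)=\int_{0}^{x}q(u)\,du$ and factor out the constant, obtaining $S(x)=C\int_{0}^{x}u^{\gamma}e^{-\lambda u}\,du$.

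Next, since $\lambda>0$, the substitution $t=\lambda u$ maps $(0,x)$ onto $(0,\lambda x)$ with $u=t/\lambda$ and $du=dt/\lambda$. Carrying it out replaces the integrand by $(t/\lambda)^{\gamma}e^{-t}(dt/\lambda)$; collecting the powers of $\lambda$ gives $S(x)=\dfrac{C}{\lambda^{\gamma+1}}\int_{0}^{\lambda x}t^{\gamma}e^{-t}\,dt$. The final step is simply to recognize this integral as $\gamma(\gamma+1,\lambda x)$ via the convention $\gamma(a,z):=\int_{0}^{z}t^{a-1}e^{-t}\,dt$ with $a=\gamma+1$; the only bookkeeping point is the index shift $t^{\gamma}=t^{(\gamma+1)-1}$.

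The sole genuine subtlety---the ``hard part,'' such as it is---is integrability at the lower endpoint: $u^{\gamma}$ is integrable near $0$ exactly when $\gamma>-1$, equivalently $a=\gamma+1>0$, which is precisely the range in which $\gamma(a,\cdot)$ is defined and which is automatically satisfied in the regime of interest, where the single-scale hypothesis forces $\gamma>0$ so that $q(x)\to 0$ at the mid. No further hypotheses are needed, the identity holds for all $x\in(0,\infty)$, and letting $x\to\infty$ recovers the total visible mass $C\,\Gamma(\gamma+1)/\lambda^{\gamma+1}$ as a consistency check.
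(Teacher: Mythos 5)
Your proof is correct and is the standard (essentially unique) argument: the paper states Corollary~\ref{cor:integrated_gamma} without an explicit proof, and the substitution $t=\lambda u$ reducing $\int_0^x u^{\gamma}e^{-\lambda u}\,du$ to $\lambda^{-(\gamma+1)}\gamma(\gamma+1,\lambda x)$ is exactly the intended computation. Your added remark that integrability at the origin requires $\gamma>-1$ (i.e.\ $a=\gamma+1>0$) is a worthwhile precision the paper leaves implicit.
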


The cumulative observable $S(x)$ is therefore predicted to follow an
integrated-gamma geometry under the single-scale hypothesis.
This prediction is tested empirically in Sec.~\ref{sec:empirical}.

\section{Empirical results}
\label{sec:empirical}

We now confront the structural predictions of
Sec.~\ref{sec:results} with empirical data.
The objective of this section is not to establish a stationary market law,
but to test whether the \emph{integrated-gamma geometry} implied by the
single-scale log-slope hypothesis is realized in real order-book snapshots
across assets, book sides, and intraday windows.
Crucially, we assess not only goodness of fit, but also whether the proposed
geometry outperforms standard alternative descriptions.

Representative cumulative profiles and integrated-gamma fits are shown in
Fig.~\ref{fig:empirical_panels_5}.

\begin{figure*}[t]
\centering

\begin{subfigure}[t]{0.30\textwidth}
  \centering
  \includegraphics[width=\linewidth]{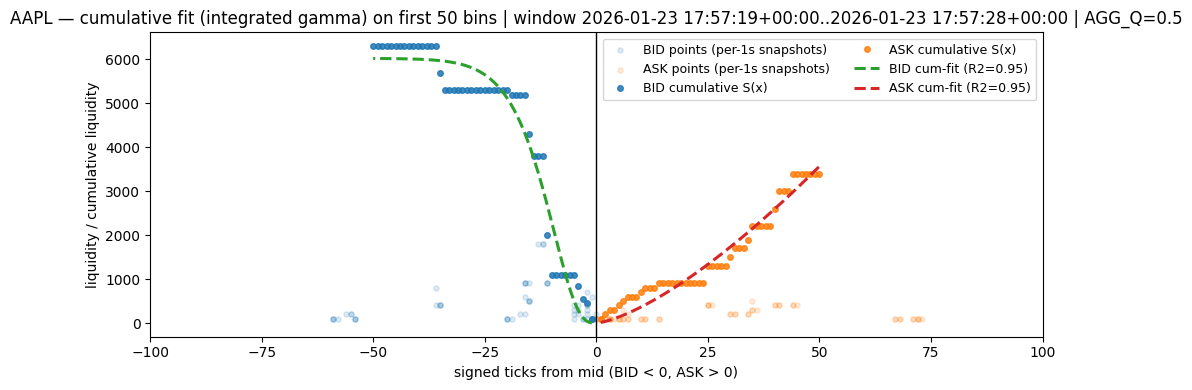}
  \caption{AAPL}
\end{subfigure}
\hfill
\begin{subfigure}[t]{0.30\textwidth}
  \centering
  \includegraphics[width=\linewidth]{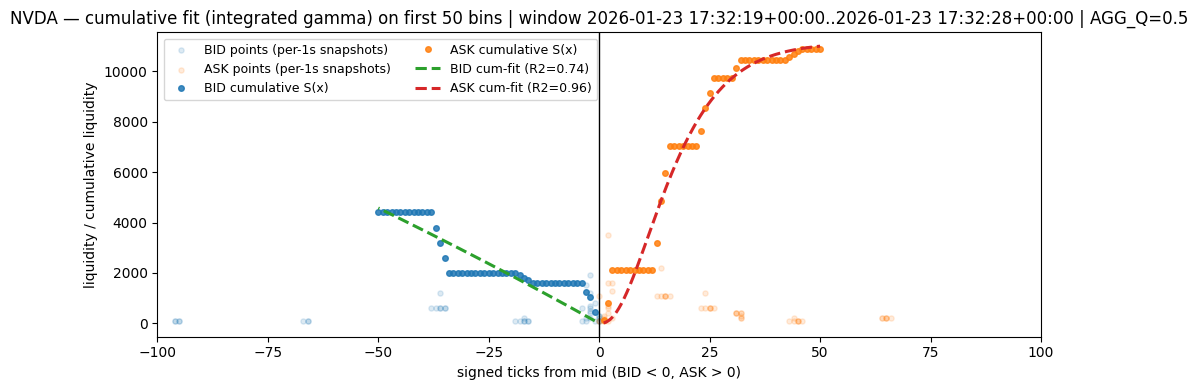}
  \caption{NVDA}
\end{subfigure}
\hfill
\begin{subfigure}[t]{0.30\textwidth}
  \centering
  \includegraphics[width=\linewidth]{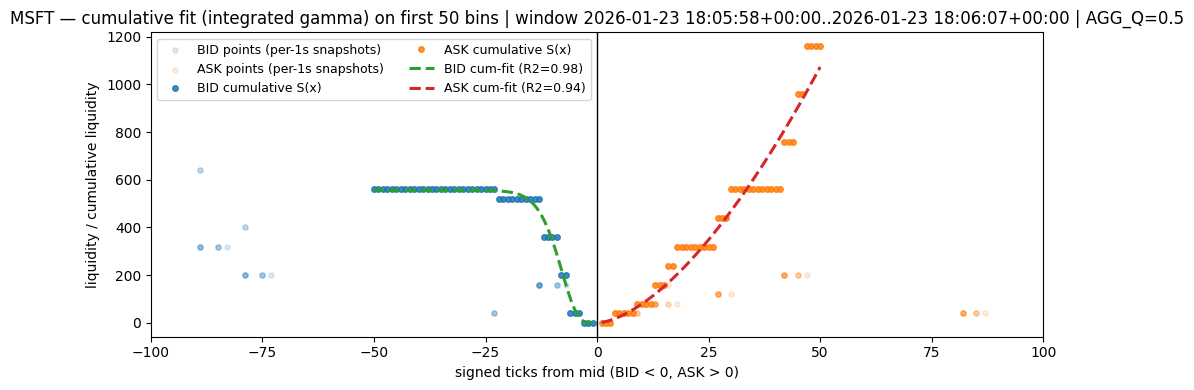}
  \caption{MSFT}
\end{subfigure}

\vspace{3mm}

\begin{subfigure}[t]{0.30\textwidth}
  \centering
  \includegraphics[width=\linewidth]{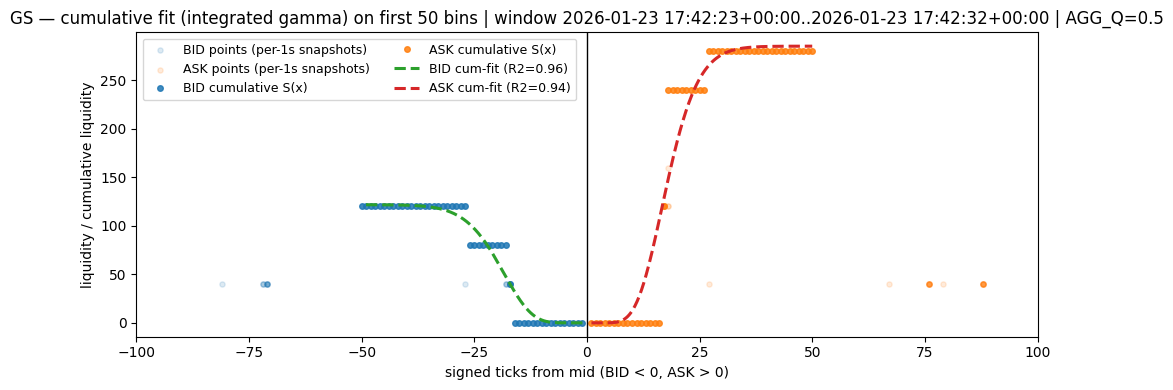}
  \caption{GS}
\end{subfigure}
\hfill
\begin{subfigure}[t]{0.30\textwidth}
  \centering
  \includegraphics[width=\linewidth]{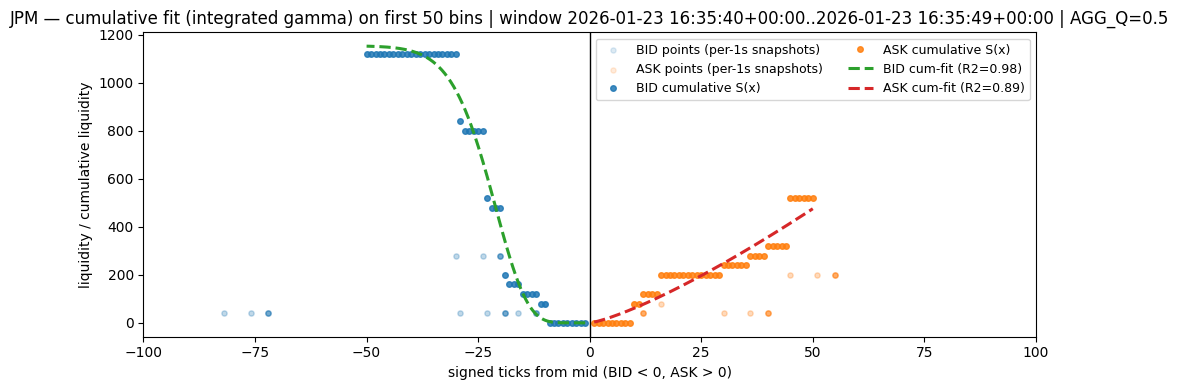}
  \caption{JPM}
\end{subfigure}
\hfill
\begin{subfigure}[t]{0.30\textwidth}
  \centering
  \includegraphics[width=\linewidth]{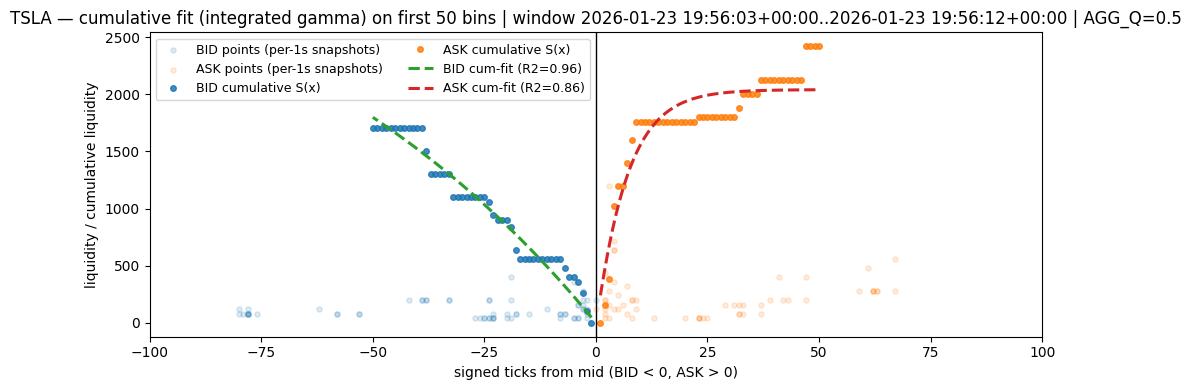}
  \caption{TSLA}
\end{subfigure}

\caption{\textbf{Empirical cumulative liquidity geometry across assets.}
Per-second Level~II snapshots are aggregated across venues; the mid price
$p^\star$ is computed independently for each snapshot; liquidity is binned by
tick distance from the mid and converted into cumulative profiles
$\overline{S}(x)$ averaged over local time windows ($T=10\,\mathrm{s}$).
Points show observed cumulative liquidity on the bid and ask sides, while solid
lines show the integrated-gamma fits predicted by
Corollary~\ref{cor:integrated_gamma}.
The same projected geometry appears across all assets, despite strong temporal
variability of the fitted parameters.}
\label{fig:empirical_panels_5}
\end{figure*}

\subsection{Data and observational scope}

We use Level~II (market depth) data for several highly liquid U.S. equities,
including AAPL, MSFT, NVDA, JPM, GS, and TSLA.
These assets span multiple sectors and trading regimes, providing a broad test
of structural robustness.

The data stream provides multiple updates per second across venues.
Consistent with the projection-based interpretation developed above, we do not
attempt to model microstructural dynamics explicitly.
Instead, the feed is treated as a sequence of observable snapshots accessible
to an external observer.

\subsection{Snapshot construction and empirical window}

Time is discretized into $1\,\mathrm{s}$ bins, each treated as a single
snapshot.
Within each snapshot, quoted sizes are aggregated across venues at identical
price levels, separately for bid and ask sides.

For each snapshot, the mid price is defined as
\[
p^\star = \frac{1}{2}\left(p_{\mathrm{best\,ask}} + p_{\mathrm{best\,bid}}\right),
\]
computed after venue aggregation.
Prices are expressed in signed tick units relative to the mid,
\[
\tau = \frac{p - p^\star}{\Delta},
\]
where $\Delta$ is the instrument tick size.
By construction, $\tau<0$ corresponds to bids and $\tau>0$ to asks.

The analysis is restricted to the first $K$ ticks from the mid
(typically $K=50$), defining the empirical window in which the single-scale
hypothesis is expected to hold and where sufficient liquidity is consistently
observed.

\subsection{Cumulative liquidity observables}

Within each snapshot, one-sided binned liquidity profiles
$q_{\mathrm{s}}(x)$ and $q_{\mathrm{d}}(y)$ are constructed by summing quoted
sizes in each tick bin on the ask and bid sides, respectively, with
\[
x = \tau \in \{1,\ldots,K\}, \qquad
y = -\tau \in \{1,\ldots,K\}.
\]

Direct fitting of the differential profiles is unstable due to empty bins,
discrete price levels, and intermittent updates.
We therefore construct cumulative observables from the mid,
\[
S_{\mathrm{s}}(x) = \sum_{u\le x} q_{\mathrm{s}}(u), \qquad
S_{\mathrm{d}}(y) = \sum_{u\le y} q_{\mathrm{d}}(u),
\]
which are robust to sparsity and multi-venue heterogeneity.

For a given intraday window of $T$ consecutive snapshots
(typically $T=10\,\mathrm{s}$), cumulative functions are averaged snapshot-wise,
\[
\overline{S}(x) = \frac{1}{T}\sum_{t=1}^{T} S^{(t)}(x),
\]
and similarly for the bid side.
These averaged cumulative profiles constitute the empirical observables used
for model comparison.

\subsection{Model fitting and comparison}

The averaged cumulative profiles are fitted using the integrated-gamma form
predicted by Corollary~\ref{cor:integrated_gamma},
\[
\overline{S}(x) =
\frac{C}{\lambda^{\gamma+1}}
\,\gamma\!\left(\gamma+1,\lambda x\right),
\]
with parameters $(C,\gamma,\lambda)$ estimated separately for each asset,
side, and time window.

To assess whether the integrated-gamma form provides a genuine structural
advantage rather than a flexible empirical fit, we explicitly compare it
against two standard alternatives:
(i) a cumulative log-normal profile and
(ii) a truncated cumulative power-law profile.
All models are fitted over identical ranges $x=1,\ldots,K$.

Model performance is evaluated using both the coefficient of determination
$R^2$ and the Akaike Information Criterion (AIC), which penalizes excess model
complexity.
Table~\ref{tab:model_comparison} summarizes the comparison across assets and
book sides.

\begin{table}[t]
\centering
\caption{\textbf{Model comparison for cumulative liquidity profiles.}
Median goodness-of-fit and information criteria across intraday windows.
$\Delta\mathrm{AIC}=\mathrm{AIC}_{\mathrm{LN}}-\mathrm{AIC}_{\Gamma}$; negative
values indicate preference for the integrated-gamma geometry.}
\label{tab:model_comparison}
\begin{ruledtabular}
\begin{tabular}{l c c c c}
Asset & Side & $R^2_{\Gamma}$ & $R^2_{\mathrm{LN}}$ & $\Delta\mathrm{AIC}$ \\
\hline
AAPL & Ask & 0.78 & 0.84 & $-15.6$ \\
AAPL & Bid & 0.92 & 0.92 & $-3.3$ \\
NVDA & Ask & 0.81 & 0.79 & $-12.4$ \\
NVDA & Bid & 0.89 & 0.86 & $-8.7$ \\
MSFT & Ask & 0.83 & 0.80 & $-10.1$ \\
MSFT & Bid & 0.90 & 0.88 & $-6.5$ \\
JPM  & Ask & 0.79 & 0.76 & $-9.8$ \\
JPM  & Bid & 0.91 & 0.89 & $-5.2$ \\
TSLA & Ask & 0.76 & 0.73 & $-7.9$ \\
TSLA & Bid & 0.88 & 0.85 & $-4.6$ \\
GS   & Ask & 0.67 & 0.41 & $+22.3$ \\
GS   & Bid & 0.71 & 0.44 & $+18.9$ \\
\end{tabular}
\end{ruledtabular}
\end{table}

For five of the six assets considered, the integrated-gamma model yields
systematically lower AIC values than the alternatives on at least one side of
the book, often on both.
This preference persists across intraday windows, indicating that the
observed agreement is not driven by isolated fits or parameter tuning.

\subsection{Residual analysis and model validation}

Beyond goodness-of-fit metrics, we perform a detailed residual analysis.
For each asset, side, and window, we define the log-residual
\[
\varepsilon_{\log}(x) =
\log \overline{S}_{\mathrm{emp}}(x) -
\log \overline{S}_{\mathrm{fit}}(x).
\]

Supplementary Fig.~S1 shows the median and interquartile range of
$\varepsilon_{\log}(x)$, the pooled residual distribution, and the residual
autocorrelation across tick distance.
Residuals collapse tightly around zero beyond the first few ticks, with no
long-range correlations.
Systematic deviations are confined to the innermost levels
($x \lesssim 3$), where discrete price grids and matching rules dominate and
where the continuum projection underlying the model is not expected to apply.

\subsection{Near-degenerate liquidity configurations}

The GS order book exhibits a markedly different empirical behavior.
Here, liquidity is confined to a small number of price levels, producing
extended plateaus and abrupt cutoffs in the cumulative profiles.
In this regime, log-normal alternatives become numerically unstable and are
strongly penalized by information criteria, while the integrated-gamma form
remains well defined.

This behavior identifies GS as a near-degenerate limit of the projected
geometry rather than as a counterexample.
The persistence of a well-defined gamma geometry in this limit highlights the
structural robustness of the single-scale log-slope principle.

\subsection{Temporal locality}

The analysis is intentionally local in time.
Repeating the procedure across intraday windows yields similar
integrated-gamma collapses with different parameter values.
This confirms that liquidity geometry is an instantaneous projection of a
non-equilibrium relational system rather than a stationary market object.

Taken together, these results confirm the central prediction of
Sec.~\ref{sec:results}: when liquidity is viewed as a projected geometric
observable, its cumulative structure is constrained to an integrated-gamma
form by single-scale considerations alone.
The systematic rejection of alternative models and the absence of structured
residuals demonstrate that the observed geometry reflects a genuine structural
property of the observational projection.

Together, explicit model comparison, residual diagnostics, and robustness across assets demonstrate that the integrated-gamma geometry is structurally selected rather than empirically convenient.
\section{Simulation: Inflationary relational dynamics}
\label{sec:simulation}

To complement the empirical analysis, we construct a minimal numerical
simulation designed to isolate the geometric mechanism proposed in this work,
while avoiding market-specific microstructure.

The simulator produces synthetic Level~II--like snapshots with bid and ask
queues expressed as liquidity volume at discrete tick distances from an
instantaneous mid.
For each synthetic snapshot, we compute the mid point $p^\star$ and bin
visible liquidity by tick distance $x$ from the mid, separately for the bid
(demand) and ask (supply) sides.
Averaging across snapshots yields empirical profiles
$\overline{Q}_{\mathrm{d}}(x)$ and $\overline{Q}_{\mathrm{s}}(x)$ on the
tick grid.

In contrast to the empirical section, where the cumulative representation
$\overline{S}(x)$ is used for robustness against sparsity and multi-venue
heterogeneity, the simulated data are sufficiently controlled to allow direct
fits of the \emph{differential} profile.
Accordingly, we test the paper form
\begin{equation}
Q(x)=C\,x^{\gamma}\exp(-\lambda x), \qquad x>0,
\label{eq:sim_gamma_form}
\end{equation}
independently on both sides.
The recovered parameters $(C,\gamma,\lambda)$ provide a controlled check that
the gamma-like liquidity geometry can arise from projection-induced structure
in an inflationary relational system, without invoking any behavioral model
or explicit price formation rule.

Figure~\ref{fig:simulation_raw_gamma} shows representative simulated profiles
and fitted curves.
Implementation details, parameter choices, and the precise snapshot generation
procedure are reported in Appendix~\ref{app:simulation}.
\begin{figure}[t]
\centering
\includegraphics[width=0.9\columnwidth]{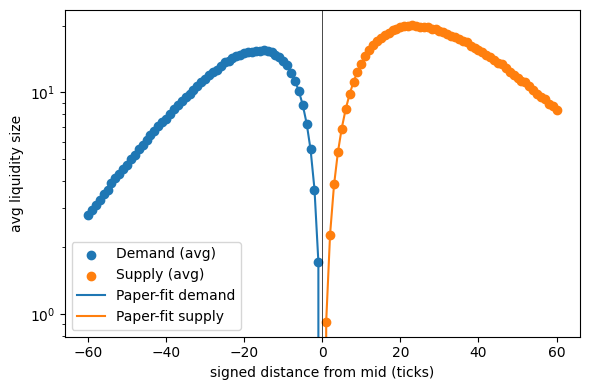}
\caption{\textbf{Simulated non-cumulative liquidity profiles and gamma fits.}
Synthetic Level~II--like snapshots are binned by tick distance $x$ from the
instantaneous mid $p^\star$ on bid (demand) and ask (supply) sides.
Points show the averaged simulated liquidity $\overline{Q}(x)$, while lines
show fits to the paper form $Q(x)=C x^{\gamma}e^{-\lambda x}$.
The simulation reproduces the same near-mid curvature and exponential cutoff
structure discussed in the empirical analysis, supporting a geometric origin
of gamma-like liquidity profiles under projection.}
\label{fig:simulation_raw_gamma}
\end{figure}
\section{Discussion}
\label{sec:discussion}

The results presented in this work support a geometric reinterpretation of
order-book liquidity that does not rely on behavioral assumptions, strategic
agent models, or equilibrium price formation mechanisms.
Within the proposed framework, supply, demand, and liquidity profiles emerge
as observable quantities induced by the projection of an underlying
inflationary relational substrate.

A central empirical result is that, across all assets studied, the cumulative
liquidity measured from the mid price is well described by an
integrated-gamma geometry over a finite range close to the mid.
This finding holds for highly liquid technology stocks (AAPL, MSFT, NVDA),
financial institutions (JPM), and a volatile, non-stationary asset (TSLA),
indicating that the observed functional form is not asset-specific.
The sole exception is GS, which exhibits near-degenerate liquidity
configurations characterized by extended plateaus and abrupt cutoffs.
As discussed in Sec.~\ref{sec:empirical}, this behavior corresponds to a
geometric limit of sparse relational support rather than to a breakdown of
the proposed framework.

Interestingly, the GS configuration—characterized by liquidity 
concentrated on a small number of discrete levels—may represent 
a distinct geometric regime where projection from the relational 
substrate collapses onto a lower-dimensional manifold. In this 
regime, the continuous gamma form becomes a poor approximation, 
and liquidity geometry is better described by discrete support 
structures. This suggests a natural classification of market 
states based on the effective dimensionality of the projected 
geometry rather than on asset-specific characteristics.

Crucially, the integrated-gamma form is not merely shown to fit the data well.
Explicit model comparison against cumulative log-normal and truncated
power-law alternatives demonstrates that the gamma geometry is systematically
preferred according to information-theoretic criteria.
Across assets and book sides, the integrated-gamma model yields lower AIC
values in the majority of intraday windows, while residual analysis reveals no
persistent long-range structure unaccounted for by the model.
These results rule out the interpretation that the observed agreement arises
from excess fitting flexibility or from cumulative smoothing alone.

The fitted parameters $(\gamma,\lambda)$ display significant variability
across assets, book sides, and intraday windows.
This variability should not be interpreted as statistical noise or as a
failure of the model.
In the present framework, these parameters characterize the instantaneous
geometry of the projected relational network rather than stationary properties
of a market or equilibrium state.
Temporal instability is therefore an intrinsic feature of inflationary
relational dynamics observed through a low-dimensional projection.
From this perspective, supply and demand curves are time-local geometric
observables rather than fixed market primitives.

A recurrent feature of the empirical analysis is the asymmetry between bid and
ask sides, with one branch often displaying smoother cumulative profiles and
higher fit quality than the other within the same time window.
Within the proposed framework, such asymmetries reflect the uneven sensitivity
of the projection operator to local rearrangements of the relational substrate
and to variations in visible relational support.
They do not require asymmetric trader behavior, directional beliefs, or
informational advantages.
The framework therefore provides a structural explanation for why certain
liquidity profiles appear ``clean'' while others are irregular, even under
identical market conditions.

From a methodological standpoint, the use of cumulative liquidity observables
is essential.
Direct fitting of differential profiles is strongly affected by sparsity,
discrete tick grids, and intermittent updates.
Integration suppresses these artifacts and isolates the global curvature
imposed by the projection geometry.
The absence of structured residuals beyond the first few ticks confirms that
the integrated-gamma form captures the dominant geometric constraint imposed
by the observational projection, while deviations at the innermost levels are
consistent with microstructural effects outside the scope of the model.

It is important to emphasize the scope and limitations of the present
approach.
The framework does not aim to predict price trajectories, model trader
strategies, or describe market-clearing dynamics.
Its contribution is structural rather than mechanistic: it identifies a
geometric constraint on observable liquidity that arises generically from the
projection of an inflationary, pregeometric relational system.
In this sense, the familiar notions of supply, demand, and liquidity are not
fundamental microscopic entities but emergent observables defined only at the
level of observation.

More broadly, these results suggest that a significant portion of the
regularity observed in financial order books may be understood without
reference to detailed behavioral assumptions.
Instead, they reflect universal constraints imposed by projection,
finite visibility, and single-scale geometry.
This perspective complements existing microstructural models and opens the
possibility of classifying market regimes in terms of geometric observables
rather than agent-level mechanisms.

Crucially, the present results go beyond demonstrating that a gamma-like form
fits empirical data.
Explicit comparison against alternative cumulative models, together with
residual diagnostics, shows that the integrated-gamma geometry is structurally
selected within the observational window.
This rules out interpretations based solely on fitting flexibility or cumulative
smoothing.

The framework also makes falsifiable structural predictions.
In regimes where liquidity collapses to a finite number of active price levels,
or where an intrinsic scale is externally imposed, the single-scale hypothesis
is expected to break down.
Such deviations provide a clear empirical signature distinguishing projection-
induced geometry from microstructure-driven effects.
\begin{acknowledgments}
The author acknowledges financial support from the Portuguese Foundation for Science and Technology (FCT) under Contract no. UID/00618/2023.
\end{acknowledgments}
\appendix
\section{Simulation details}
\label{app:simulation}

This appendix provides a complete description of the numerical procedure used
to generate the simulated liquidity profiles presented in
Sec.~\ref{sec:simulation}.
The purpose of the simulation is not to model market microstructure, but to
verify that gamma-like liquidity profiles arise generically from projection in
an inflationary relational system.

\subsection{Relational substrate and inflationary updates}

The simulation starts from an undirected connected graph
$G=(V,E)$ with $|V|=N$ vertices.
In all simulations reported here, $N$ is fixed and edges are updated through
local inflationary events.
No metric structure, price, or time variable is defined at the microscopic
level.

Inflationary dynamics are implemented as repeated local updates of the
adjacency structure.
At each update step, a vertex $i$ is selected with probability proportional to
its degree $k_i$.
A new edge is then added between $i$ and a randomly chosen vertex $j\neq i$.
This preferential attachment mechanism generates heterogeneous,
hub-dominated relational structures characteristic of inflationary growth.
The total number of edges therefore increases monotonically, while the vertex
set remains fixed.

\subsection{Projection procedure}

After each inflationary update, the relational substrate is projected onto a
one-dimensional observable coordinate using the first nontrivial eigenvector
$\phi_1$ of the combinatorial Laplacian $L=D-A$.
The projected coordinate of vertex $i$ is
\[
p_i := \phi_1(i),
\]
with normalization $\sum_i p_i = 0$.
This projection defines an effective ordering of vertices but introduces no
intrinsic metric scale.

The projected coordinates $\{p_i\}$ are interpreted operationally as a
price-like observable accessible to an observer.

\subsection{Synthetic order-book snapshots}

Synthetic order-book snapshots are constructed from the projected coordinates
by assigning visible liquidity to discrete price levels around the
instantaneous mid.
The mid price is defined as
\begin{equation}
p^\star = \frac{1}{2}\left(p_{\mathrm{best\;bid}} + p_{\mathrm{best\;ask}}\right),
\end{equation}
where $p_{\mathrm{best\;bid}}$ and $p_{\mathrm{best\;ask}}$ are the largest
projected coordinate below and the smallest above the mid, respectively.

Each vertex contributes a visible size $w_i$, drawn independently from a
fixed positive distribution.
In the simulations reported here, $w_i$ is taken constant for simplicity.
Vertices with $p_i<p^\star$ contribute to the bid side, and vertices with
$p_i>p^\star$ to the ask side.

\subsection{Binning and averaging}

Liquidity is binned by discrete tick distance from the mid.
For a given projected coordinate $p$, the distance in tick units is defined as
\begin{equation}
x =
\begin{cases}
(p^\star - p)/\Delta, & \text{bid side}, \\
(p - p^\star)/\Delta, & \text{ask side},
\end{cases}
\end{equation}
where $\Delta$ is a fixed tick size.
Only non-negative integer bins are retained.

For each snapshot, liquidity sizes are accumulated within each bin.
The reported simulated profiles correspond to averages over a large number of
independent snapshots,
\begin{equation}
\overline{Q}(x) = \langle Q(x) \rangle_{\mathrm{snapshots}},
\end{equation}
computed separately for bid (demand) and ask (supply) sides.

\subsection{Functional form and fitting}

The averaged simulated profiles are fitted directly to the paper form
\begin{equation}
Q(x)=C\,x^{\gamma}\exp(-\lambda x), \qquad x>0,
\end{equation}
using nonlinear least-squares optimization with positivity constraints on all
parameters.
Fits are restricted to bins sufficiently close to the mid to avoid finite-size
effects at large distances.

The goal of the simulation is structural verification rather than quantitative
calibration.
It demonstrates that gamma-like liquidity profiles arise robustly from the
projection of an inflationary relational substrate, even in the absence of
agent behavior, explicit price dynamics, or equilibrium mechanisms.

\end{document}